\documentclass{article} 
\usepackage{eqnarray}
\usepackage{graphicx}
\usepackage{amsmath}
\usepackage{amsthm}
\usepackage{amssymb}
\usepackage{setspace}

\DeclareMathOperator{\sgn}{sgn}

\newcommand{\ket}[1]{\left\vert{#1}\right\rangle}

\newtheorem{definition}{Definition}
\newtheorem{lemma}{Lemma}
\newtheorem{theorem}{Theorem}
\newtheorem{corollary}{Corollary}

\begin{document}

\title{Fourier $1$-norm and quantum speed-up}
\author{Sebasti\'an Alberto Grillo\\
Universidad Aut\'onoma de Asunci\'on \\
Asunci\'on, Paraguay\\
sgrillo@uaa.edu.py\\[2em]
\and
Franklin de Lima Marquezino\\
Universidade Federal do Rio de Janeiro \\
Rio de Janeiro, Brazil\\
franklin@cos.ufrj.br
}

\maketitle

\begin{abstract}
Understanding quantum speed-up over classical computing is fundamental for the development of efficient quantum algorithms.
In this paper, we study such problem within the framework of the Quantum Query Model, which represents the probability of output $x \in \{0,1\}^n$  as a function 
$\pi(x)$.  
We present a classical simulation for output probabilities $\pi$, whose error depends on the Fourier $1$-norm of $\pi$.
Such dependence implies upper-bounds for the quotient between the number of queries applied by an optimal classical algorithm and our quantum algorithm, respectively. These upper-bounds show a strong relation between Fourier $1$-norm and quantum parallelism. We show applications to query complexity.\\
\textbf{Keywords:} quantum query, randomized query, simulation.
\end{abstract}

\section{Introduction}

A primary motivation in quantum computing is obtaining algorithms that solve problems much faster than the best classical counterparts.  The quantum and classical decision tree models allow us to prove the existence of quantum speed-up in relation to classical query for several problems~\cite{AMBXXX,Deutsch,Reich}. Query problems can be formulated as computing Boolean functions from inputs in $\{0,1\}^{n}$, with complexity being defined as the number of queries to the input, ignoring other computations~\cite{buhrman2002complexity}.  
This implies an important simplification of the analysis in comparison to problems formulated by Turing machines, where separations between complexity classes are usually much harder to prove~\cite{aaronson2010bqp}. Several quantum algorithms can be formulated within query models \cite{Jordan}, thus this formalism is powerful enough for analyzing important algorithms, such as search algorithms~\cite{Aaronson2} or even non-query algorithms as Shor's algorithm~\cite{AARONAMB}.

A complete understanding of quantum speed-up implies determining where and how it occurs. Thus, we can study such question from two distinct approaches:  determining which \textit{functions} or  which \textit{algorithms} allow a gap between quantum and classical computing. 
The first approach is intensively used 
in quantum query complexity, where effort is mainly invested in obtaining bounds for complexity measures and checking their tightness~\cite{Aaronson2,buhrman2002complexity}. 
The second approach is commonly implemented by identifying which quantum features are hard to simulate within classical sources~\cite{abbott2010understanding}. One of the earliest attempts to explain quantum advantage is the discussion of quantum parallelism in quantum algorithms~\cite{Deutsch2}.

A well studied quantum feature is quantum entanglement~\cite{JOZSA2}, which has been identified as a necessary condition for quantum speed-up in pure-state algorithms~\cite{JOZSA}. At the same time, the study of quantum entanglement depends on whether pure or mixed quantum states are allowed~\cite{DATTA2} and the measure defined for such entanglement~\cite{JOZSA,VIDAL}. As an example of a widely applied entanglement measure, we can consider the size of partitions that describe product states in the quantum algorithm. If the size of the subsets in those partitions is upper-bounded by a constant through all the steps of the quantum algorithm, then it has an efficient classical simulation~\cite{JOZSA}. In addition, we can analyze the entanglement in a quantum state by measuring the Schmidt rank, where a polynomial upper-bound for this measure implies a polynomial classical simulation~\cite{VIDAL}. Using a model previously defined by Knill and Laflamme~\cite{KNILL}, different conditions for quantum speed-up were also identified. Such conditions are formulated on quantum correlations that are analyzed by a measure known as \textit{quantum discord}~\cite{DATTA}.
A recent proposal comes from \textit{no-go} theorems, identifying contextuality~\cite{BELL,KOCHEN}  as a necessary condition for quantum speed-up---this condition presents an inequality\- violation in contrast to the other conditions based in measures~\cite{HOWARD}. 
The identification of necessary conditions for quantum advantages
is an important issue for theoretical purposes and for the design of better quantum algorithms, specially if the conditions can be monitored in our design. 
Summarizing, a general goal in this line of research is to obtain sufficient and necessary conditions for quantum speed-up.

The present work offers a new perspective about speed-up produced by quantum algorithms in the Quantum Query Model (QQM), which is the quantum generalization for decision tree models.  First, we consider that the probability of obtaining a given output is a linear combination of orthonormal functions, where such set of functions is denoted as  \textit{Fourier basis} \cite{DEWOLF1}. 
This approach is usually referred to as \textit{analysis of Boolean functions}, and has several results in quantum query and computer science \cite{rotteler2010quantum,ozols2013quantum,montanaro2012some,o2014analysis}.  
Using such representation of the output probability, we define a classical simulation of the quantum algorithm. The idea of our simulation is implementing minor simulations for parity functions from the Fourier basis, where each simulated function appears in the Fourier decomposition of the output probability.
Similarly to related works in the context of quantum entanglement or quantum discord, in this paper we follow a strategy known as \textit{dequantization}~\cite{abbott2010understanding}, which consists in analyzing how hard is the simulation of some algorithm in relation to a given measure. We prove that the error in our simulation depends on the $L_{1}$ norm defined over the Fourier basis, where such norm is computed for the output probability. 
Thereby, a necessary property for a hard classical simulation of a given quantum algorithm, is having a large Fourier $1$-norm for its output probability functions. This necessary condition is formalized 
as an upper bound for the quotient between the number of queries of 
an optimal classical algorithm and of the simulated quantum algorithm, respectively. Notice that a well designed algorithm in the QQM setting should maximize such quotient. 

The state of any algorithm in the QQM can be described as a sum of vectors whose phases change depends on the input. The phase of each of these vectors may depend on different values from input, which shows quantum parallelism in action~\cite{GRILLO}. We show that the minimum size and the number of such vectors 
limit the value of the  Fourier $1$-norm, which allows alternative necessary conditions for quantum speed-up. The Fourier $1$-norm is maximized by the homogeneity on the size of the vectors. Which implies that simulating such balanced probabilities can be expensive by classical means. Therefore, our results give more formalism to the notion of quantum parallelism. Finally, we show applications of our results on (i) upper-bounds for randomized query, (ii) lower-bounds for exact quantum query and (iii) polynomial simulation by randomized query. 

This work is structured as follows. In Sec.~\ref{S2}, we introduce preliminary formulations and theorems about the QQM. In Sec.~\ref{S3}, we describe a classical simulation of quantum algorithms.
In Sec.~\ref{uper}, we present the upper bounds  
from our simulation. In Sec.~\ref{S5}, we present alternative applications of our results. In Sec.~\ref{S6}, we present our conclusion.

\section{Preliminary notions}
\label{S2}

The QQM~\cite{BARNUM} describes  algorithms computing functions whose domain is 
a subset of 
$\left\{0,1\right\}^{n}$. We describe the states and operations within such model, over a Hilbert space $\mathcal{H}$ with basis states $\left|i\right\rangle \left|j\right\rangle$, where $i\in\left\{ 0,1,..,n\right\}$ and $j\in\left\{ 1,..,m\right\}$, for an arbitrary $m$. The query operator is defined as
$O_{x}\left|i\right\rangle \left|j\right\rangle =\left(-1\right)^{x_{i}}\left|i\right\rangle \left|j\right\rangle$,
where $x \equiv x_0 x_1 \cdots x_n$ is the input, and $x_{0} \equiv 0$. The final state of the algorithm over input $x$ is defined as
$\left|\Psi_{x}^{f}\right\rangle =U_{t}O_{x}U_{t-1}...O_{x}U_{0}\left|\Psi\right\rangle$, where $\left\{ U_{i}\right\}$ is a set of unitary operators over $\mathcal{H}$ and $\left|\Psi\right\rangle$ is a fixed state in $\mathcal{H}$. The number of
queries or steps is defined as the times that $O_{x}$ occurs in the
algorithm.  

\begin{definition}
An indexed set of pairwise
orthogonal projectors $\left\{ P_{z}:z\in T\right\}$ is called a Complete
Set of Orthogonal Projectors (CSOP) if it satisfies
\begin{equation}
  \sum_{z\in T}P_{z}=I_{\mathcal{H}},
\end{equation}
taking $I_{\mathcal{H}}$ as the identity operator for $\mathcal{H}$.
\end{definition}

Given a CSOP defined for the algorithm, the probability of obtaining the output $z\in T$ is
$\pi_{z}\left(x\right)=\left\Vert P_{z}\left|\Psi_{x}^{f}\right\rangle
\right\Vert ^{2}$. 
We say that an algorithm computes a function $f:D\rightarrow T$ within error $\varepsilon$ if
 $\pi_{f\left(x\right)}\left(x\right) \geq 1-\varepsilon$ for all input $\ensuremath{x}\in D\subset\left\{ 0,1\right\} ^{n}$.
 
\subsection{An alternative formulation for the QQM}

In this section, we introduce notation from a previous work~\cite{GRILLO}. We define a
product of unitary operators
$\widetilde{U}_{n}=U_{n}U_{n-1}\ldots U_{0}$. We denote a CSOP $\left\{
  \bar{P}_{k}:0\leq k\leq n\right\}$, where the range of each $\bar{P}_{i}$ 
 is composed by vectors 
of the form~$\left|i\right\rangle
\left|\psi\right\rangle \in\mathcal{H}$, for $i\in\left\{ 0,1,..,n\right\}$ and any state $\left|\psi\right\rangle.$ We also introduce the notation
${\widetilde{P}_{i}^{j}=\widetilde{U}_{j}^{\dagger}\bar{P}_{i}\widetilde{U}_{j}}$. Notice that for any fixed $j$ we have that $\left\{ \widetilde{P}_{k}^{j}:0\leq
  k\leq n\right\}$ is also a CSOP. The following definition introduces an alternative representation for quantum query algorithms on the QQM.
 \begin{definition}
  \label{debs}
  Consider a set 
  $\mathbb{Z}_{n+1}=\left\{ 0,1,\ldots,n\right\}$. An
  indexed set of vectors $\left\{
    \left|\Psi\left(k\right)\right\rangle \right.$$\in
  \mathcal{H}:$$\left.k\in \mathbb{Z}_{n+1}^{t+1} \right\}$ \emph{is
    associated with} a quantum query algorithm
        if we have that
  \begin{equation}\label{ddesc}
    \left|\Psi\left(a\right)\right\rangle =\widetilde{P}_{a_{t}}^{t}\ldots\widetilde{P}_{a_{1}}^{1}\widetilde{P}_{a_{0}}^{0}\left|\Psi\right\rangle,
  \end{equation} for all $a\in \mathbb{Z}_{n+1}^{t+1}$.
\end{definition}

In Lemma~\ref{desc1}, we show that vectors associated with some algorithm represent the final state as phase flips~\cite{GRILLO}. In Sec.~4, we analyze the relation between minimum norm 
(or cardinality) 
of such vectors, and the computational gap between classical and quantum query.

\begin{lemma}
\label{desc1}
If 
the indexed set of vectors $\left\{ \left|\Psi\left(k\right)\right\rangle \in
  H_{A}:k\in \mathbb{Z}_{n+1}^{t+1} \right\}$ is associated with a quantum algorithm then
\begin{equation}
  \label{desc}
  \widetilde{U}_{t}^{\dagger}O_{x}U_{t} \ldots U_{1}O_{x}U_{0}\left|\Psi\right\rangle =\sum_{k_{t}=0}^{n} \ldots \sum_{k_{0}=0}^{n} (-1)^{\sum_{i=0}^{t}x_{k_{i}}}\left|\Psi\left(k_{0}, \ldots ,k_{t}\right)\right\rangle.
\end{equation}
\end{lemma}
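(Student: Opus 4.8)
The plan is to prove the identity by induction on $t$, the number of queries, peeling off one query operator at a time from the left. The key observation is that the operators $\widetilde{U}_j^\dagger$ and $\widetilde{U}_{j-1}$ telescope: writing $\widetilde{U}_t^\dagger O_x U_t \cdots U_1 O_x U_0$ and inserting factors $\widetilde{U}_j \widetilde{U}_j^\dagger = I$ at each stage, one can re-express the product in terms of the conjugated query operators $\widetilde{U}_j^\dagger O_x \widetilde{U}_j$. So first I would establish the auxiliary fact that $\widetilde{U}_j^\dagger O_x \widetilde{U}_j = \sum_{k=0}^{n} (-1)^{x_k} \widetilde{P}_k^{j}$. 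This follows because $O_x = \sum_{k=0}^n (-1)^{x_k} \bar{P}_k$ — the query operator acts as the phase $(-1)^{x_k}$ exactly on the subspace indexed by $\ket{k}$, which is the range of $\bar P_k$ — and conjugating each $\bar P_k$ by $\widetilde U_j$ gives $\widetilde P_k^{j}$ by definition.

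Next I would rewrite the left-hand side of \eqref{desc} by repeatedly inserting identities. Starting from $\widetilde{U}_t^\dagger O_x U_t U_{t-1} O_x U_{t-2}\cdots$, note that $U_t U_{t-1}\cdots U_0 = \widetilde U_t$, and more generally the block $U_t \cdots U_{j}$ together with the trailing operators can be reorganized. Concretely, I would show by a short induction that
\begin{equation}
  \widetilde{U}_t^\dagger O_x U_t \cdots U_1 O_x U_0 = \left(\widetilde{U}_t^\dagger O_x \widetilde{U}_t\right)\left(\widetilde{U}_{t-1}^\dagger O_x \widetilde{U}_{t-1}\right)\cdots\left(\widetilde{U}_1^\dagger O_x \widetilde{U}_1\right)\widetilde{U}_0,
\end{equation}
using $\widetilde U_0 = U_0$ and $\widetilde U_j = U_j \widetilde U_{j-1}$ to verify the telescoping at each step. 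Applying this operator identity to $\ket{\Psi}$, substituting the auxiliary fact for each conjugated query, and expanding the product of sums gives
\begin{equation}
  \sum_{k_t=0}^{n}\cdots\sum_{k_0=0}^{n} (-1)^{\sum_{i=1}^{t} x_{k_i}} \widetilde{P}_{k_t}^{t}\cdots\widetilde{P}_{k_1}^{1}\,\widetilde{U}_0\ket{\Psi}.
\end{equation}
Finally, I would account for the index $k_0$: since $x_0 \equiv 0$, the phase $(-1)^{x_{k_0}}$ contributes trivially, and inserting $\sum_{k_0=0}^n \bar P_{k_0} = I$ and using $\widetilde{P}_{k_0}^{0} = \widetilde U_0^\dagger \bar P_{k_0} \widetilde U_0$ lets me write $\widetilde U_0 \ket{\Psi} = \sum_{k_0} \bar P_{k_0}\widetilde U_0\ket{\Psi}$, so that $\widetilde P_{k_t}^t \cdots \widetilde P_{k_1}^1 \widetilde U_0 \ket\Psi = \sum_{k_0}\widetilde P_{k_t}^t \cdots \widetilde P_{k_1}^1 \widetilde P_{k_0}^0 \ket\Psi$ after pulling a $\widetilde U_0 \widetilde U_0^\dagger = I$ through; this matches $\ket{\Psi(k_0,\dots,k_t)}$ from Definition~\ref{debs} and recovers the claimed phase $(-1)^{\sum_{i=0}^t x_{k_i}}$.

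The main obstacle I anticipate is bookkeeping rather than conceptual depth: one must be careful about the direction in which operators compose versus the order of the indices $k_0,\dots,k_t$, and about the boundary term at $j=0$ where $\widetilde U_0$ is not conjugated symmetrically. Getting the telescoping identity exactly right — making sure each inserted $\widetilde U_j \widetilde U_j^\dagger$ lands between the correct pair of factors — is the step most prone to error, so I would write out the $t=1$ and $t=2$ cases explicitly before invoking the general induction.
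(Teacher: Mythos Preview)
Your overall strategy---reduce $O_x$ to $\sum_k (-1)^{x_k}\bar P_k$, conjugate to obtain $\widetilde U_j^\dagger O_x \widetilde U_j = \sum_k (-1)^{x_k}\widetilde P_k^{\,j}$, then telescope and expand---is exactly the paper's approach, phrased as a closed-form operator identity rather than a step-by-step induction. The two arguments are equivalent.

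However, your telescoping identity is off by one. The left-hand side of \eqref{desc} contains $t+1$ oracle calls (the pattern is $O_x U_j$ for $j=t,t-1,\dots,0$; note the explicit tail $\ldots U_1 O_x U_0$), so the correct factorization is
\[
\widetilde U_t^\dagger O_x U_t \cdots U_1 O_x U_0
\;=\;
\bigl(\widetilde U_t^\dagger O_x \widetilde U_t\bigr)\bigl(\widetilde U_{t-1}^\dagger O_x \widetilde U_{t-1}\bigr)\cdots\bigl(\widetilde U_0^\dagger O_x \widetilde U_0\bigr),
\]
with the product running all the way down to $j=0$ and no dangling $\widetilde U_0$ at the end. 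With this correction the sum over $k_0$ and the phase $(-1)^{x_{k_0}}$ emerge directly from the last factor, and no separate ``accounting for $k_0$'' is needed. (Your own displayed expansion already signals the problem: it carries a sum over $k_0$ in which $k_0$ does not appear in the summand.)

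Your proposed fix-up for the $k_0$ term is in fact incorrect on two counts. First, $x_{k_0}$ denotes the $k_0$-th input bit, not the fixed ancilla bit $x_0$; the convention $x_0\equiv 0$ only forces the single term $k_0=0$ to have trivial phase, not the entire $k_0$-sum. Second, the algebra you sketch yields $\sum_{k_0}\widetilde P_{k_t}^{\,t}\cdots\widetilde P_{k_1}^{\,1}\widetilde P_{k_0}^{\,0}\ket\Psi = \widetilde P_{k_t}^{\,t}\cdots\widetilde P_{k_1}^{\,1}\ket\Psi$ (since $\sum_{k_0}\widetilde P_{k_0}^{\,0}=I$), which is not equal to $\widetilde P_{k_t}^{\,t}\cdots\widetilde P_{k_1}^{\,1}\widetilde U_0\ket\Psi$; the stray $\widetilde U_0$ cannot be absorbed this way. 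Once the telescoping is corrected to include the $j=0$ factor, both issues disappear and the proof goes through cleanly.
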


\begin{proof}
Following Ref.~\cite{GRILLO}, we give a proof by induction on $t$. For $t=0$, we have that Eq. (\ref{desc}) holds,
\begin{eqnarray}
\widetilde{U}_0^\dagger O_x U_0 \ket{\Psi} &=& U_0^\dagger O_x U_0 \ket{ \Psi }\\
  &=& \sum_{k_0 = 0}^n
  (-1)^{ x_{k_{0}} }\left|\Psi\left(k_{0}\right)\right\rangle.
\end{eqnarray}
For the second part of the induction, we shall notice that the equation
\begin{equation}
\label{ooo}
   O_{x}\left|\Psi\right\rangle =\sum_{i\in\left\{ k:x_{k}=0\right\} }\bar{P}_{i}\left|\Psi\right\rangle -\sum_{i\in\left\{ k:x_{k}=1\right\} }\bar{P}_{i}\left|\Psi\right\rangle
\end{equation} implies the equation \begin{equation}
  \label{eqx}
  \widetilde{U}_{j}^{\dagger}O_{x}\widetilde{U}_{j}\left|\Psi\right\rangle =\sum_{i\in\left\{ k:x_{k}=0\right\} }\widetilde{U}_{j}^{\dagger}\bar{P}_{i}\widetilde{U}_{j}\left|\Psi\right\rangle -\sum_{i\in\left\{ k:x_{k}=1\right\} }\widetilde{U}_{j}^{\dagger}\bar{P}_{i}\widetilde{U}_{j}\left|\Psi\right\rangle.
\end{equation}
Suppose that Eq. (\ref{desc}) holds for some $t$, then applying Eq. (\ref{eqx}) we obtain 
\begin{multline*}\widetilde{U}_{t+1}^\dagger O_x U_t\ldots U_1 O_x U_0 \ket{\Psi} =\\ \sum_{k_{t}=0}^{n} \ldots\sum_{k_{0}=0}^{n}(-1)^{\sum_{i=0}^{t} x_{k_{i}} } \sum_{k_{t+1}=0}^{n}(-1)^{x_{k_{t+1}}}\widetilde{P}_{k_{t+1}}^{t+1}\left|\Psi\left(k_{0},\ldots ,k_{t}\right)\right\rangle=\\ \sum_{k_{t+1}=0}^{n} \ldots\sum_{k_{0}=0}^{n}(-1)^{\sum_{i=0}^{t+1} x_{k_{i}} }\widetilde{P}_{k_{t+1}}^{t+1}\left|\Psi\left(k_{0},\ldots,k_{t}\right)\right\rangle =\\ \sum_{k_{t+1}=0}^{n} \ldots\sum_{k_{0}=0}^{n}(-1)^{\sum_{i=0}^{t+1} x_{k_{i}} }\left|\Psi\left(k_{0},\ldots ,k_{t+1}\right)\right\rangle.\end{multline*}
\end{proof}

The previous theorem shows that a quantum state depends on several components whose phases change  independently on input~$x$. Notice that the phase $(-1)^{\sum_{i=0}^{t}x_{k_{i}}}$ of each component $\left|\Psi\left(k_{0}, \ldots ,k_{t}\right)\right\rangle$ is a Walsh function. Then, each of the components depends on $t$ values from input, which at first sight is not impressive, considering that deterministic classical algorithms  compute any function that depends on $t$ values using $t$ queries. However, all components together depend on the size $n$ of input. Thus, we have the possibility of computing on $n$ variables  using just $t$ queries, which gives us another intuition about the computational speed-up by quantum means. Therefore, this formulation presents quantum parallelism more explicitly than a sequence of unitary operators.  

\section{A classical simulation for quantum query algorithms and polynomials} \label{S3}

In this section, we introduce our simulation of quantum query algorithms by classical algorithms. However, our simulation can also be extended to polynomials. This simulation is defined over the output probability $\pi_{z}\left(x\right)$ of the quantum algorithm.

We consider the Fourier basis for the vector space of all functions $f:\left \{ 0,1 \right \}^{n}\rightarrow \mathbb{R}$~\cite{DEWOLF1} given by the functions
$$\chi_{b}:\left\{ 0,1\right\} ^{n}\rightarrow\left\{ 1,-1\right\},$$ such that $\chi_{b}(x)=\left ( -1 \right )^{b\cdot x}$ for $b\in \left \{ 0,1 \right \}^{n}$ and $b\cdot x=\sum_{i} b_{i}x_{i}$. This family contains a constant function that we denote as $\chi _{0}=1$.  
Therefore, any function $f:\left \{ 0,1 \right \}^{n}\rightarrow \mathbb{R}$ can be represented as a  linear combination 
\begin{equation}\label{sum}
f=\underset{b\in \left \{ 0,1 \right \}^{n}}{\sum}\alpha_{b}\chi _{b},
\end{equation} and we denote the Fourier $1$-norm of $f$
 as \begin{equation}L\left(f\right)=\underset{b\in \left \{ 0,1 \right \}^{n}}{\sum}\left|\alpha_{b}\right|.\end{equation} Another measure is the degree of $f$, which is defined as 
 \begin{equation} \deg\left ( f \right )=\max_{\left | b \right |} \left \{ b:\alpha _{b}\neq 0 \right \}, 
 \end{equation} where $|b|$ denotes the number of ones in $b$.
 
\begin{figure}\centering\includegraphics[width=0.9\textwidth]{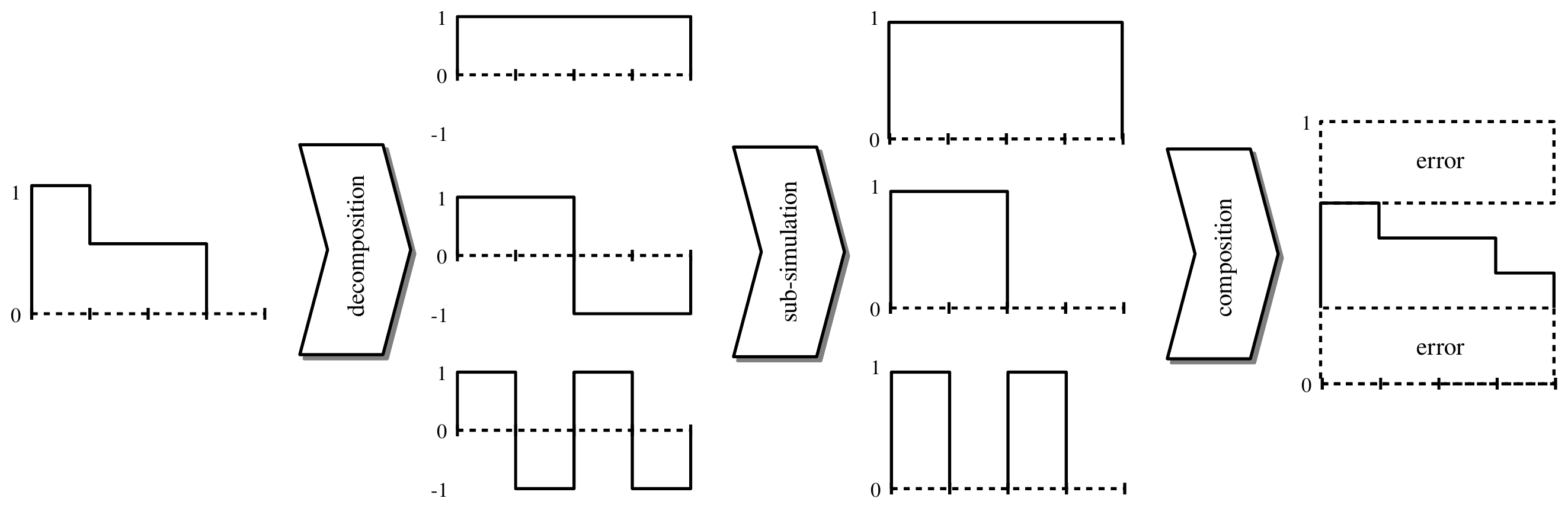}\caption{The simulation produces a contracted version of the original output probability. The new output probability can be represented as a linear transformation applied over the original output probability.\label{f111}}\end{figure}

Figure \ref{f111} presents the intuition behind our simulation. At the right, we have $\pi_{1}\left(x\right)$, the probability of obtaining output~$1$ by 
a quantum query algorithm on input $x$.
Such function is decomposed into a linear combination of functions~$\chi _{b}$, following Eq.~(\ref{sum}). The sub-simulations imply emulating each function $\chi _{b}$ by a classical algorithm that outputs $1$ with probability $\widehat{\pi}_{1}^{b}\left(x\right),$ where: (i) $\chi _{b}(x)=1$ implies that $\widehat{\pi}_{1}^{b}\left(x\right)=1$; and, (ii) $\chi _{b}(x)=-1$ implies that $\widehat{\pi}_{1}^{b}\left(x\right)=0$. Notice that each $\widehat{\pi}_{1}^{b}\left(x\right)$ is a probability and can not have negative values as functions $\chi _{b}$. The composition step is assigning appropriate probabilities to each output $\widehat{\pi}_{1}^{b}\left(x\right),$ such that the sum produces an output probability whose shape resembles $\pi_{1}\left(x\right)$. As each $\widehat{\pi}_{1}^{b}\left(x\right)$ is similar but different in relation to $\chi _{b}$, this procedure accumulates an important error. The proof of the following theorem shows the details.
\begin{theorem}
\label{1teo1}
Let $\mathcal{A}$ be a quantum algorithm that computes $f:S\rightarrow\left\{ 0,1\right\} $
for $S\subset\left\{ 0,1\right\} ^{n}$, within error $\varepsilon$ and $t$ queries.
  Then, there is a classical algorithm which computes $f$ within error
$$\tilde{\varepsilon}=\frac{\varepsilon+L\left(\pi_{1}\right)}{1+2L\left(\pi_{1}\right)}$$ 
and $2t$ queries.
\end{theorem}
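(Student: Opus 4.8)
The plan is to construct the classical simulation directly from the Fourier expansion $\pi_1=\sum_b\alpha_b\chi_b$ of the acceptance probability, realising it as a convex mixture of very simple deterministic sub-routines, one per character $\chi_b$ (as suggested by Figure~\ref{f111}). The first ingredient is the query count: the acceptance probability of a $t$-query quantum algorithm is a real polynomial of degree at most $2t$, so $\alpha_b=0$ whenever $|b|>2t$. Hence for every $b$ in the support of $\pi_1$ the parity $\chi_b(x)=(-1)^{b\cdot x}$ is computed exactly by a deterministic algorithm that reads only the $|b|\le 2t$ bits $x_i$ with $b_i=1$.

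Next I would fix the sub-routines and their weights. For each $b\neq 0$ with $\alpha_b\neq 0$, let $\mathcal{A}_b$ read those $|b|$ bits, form $\chi_b(x)$, and output $1$ exactly when $\sgn(\alpha_b)\chi_b(x)=1$ — that is, output $1$ with probability $\tfrac12\big(1+\sgn(\alpha_b)\chi_b(x)\big)$; this is the $\widehat{\pi}_1^b$ of the figure, with the outputs swapped when $\alpha_b<0$. Add two zero-query routines, $\mathcal{A}_{+}$ that always outputs $1$ and $\mathcal{A}_{-}$ that always outputs $0$. The classical algorithm picks $\mathcal{A}_b$ with probability $\tfrac{2|\alpha_b|}{1+2L}$, $\mathcal{A}_{+}$ with probability $\tfrac{2\alpha_0}{1+2L}$, and $\mathcal{A}_{-}$ with probability $\tfrac{1}{1+2L}$, where $L=L(\pi_1)$; these are nonnegative since $\alpha_0=2^{-n}\sum_x\pi_1(x)\ge 0$, and they sum to $1$ because the normalisation was chosen so that $2\alpha_0+1+\sum_{b\neq 0}2|\alpha_b|=1+2L$. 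A short computation collecting constant terms and $\chi_b$-terms then yields the key identity
\[
  \widehat{\pi}_1(x)=\frac{L+\pi_1(x)}{1+2L}.
\]
Identifying this particular affine image of $\pi_1$ is the crux of the argument: the naive weights $|\alpha_b|/L$ would only give the weaker error $\tfrac{\varepsilon+L}{2L}$, whereas the padding routines $\mathcal{A}_{\pm}$ provide exactly the slack needed to make $\widehat{\pi}_1$ a genuine probability distribution while the two decision thresholds below come out symmetric.

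Finally I would run the error analysis. Since $\mathcal{A}$ computes $f$ within $\varepsilon$ and the CSOP forces $\sum_z\pi_z\equiv 1$, for $x\in S$ we have $\pi_1(x)\ge 1-\varepsilon$ when $f(x)=1$ and $\pi_1(x)\le 1-\pi_0(x)\le\varepsilon$ when $f(x)=0$. Plugging into the displayed identity: if $f(x)=1$ then $\widehat{\pi}_1(x)\ge\tfrac{L+1-\varepsilon}{1+2L}=1-\tfrac{\varepsilon+L}{1+2L}$, and if $f(x)=0$ then $\widehat{\pi}_1(x)\le\tfrac{L+\varepsilon}{1+2L}$, so $\widehat{\pi}_0(x)=1-\widehat{\pi}_1(x)\ge 1-\tfrac{\varepsilon+L}{1+2L}$. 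In both cases $\widehat{\pi}_{f(x)}(x)\ge 1-\tilde\varepsilon$ with $\tilde\varepsilon=\tfrac{\varepsilon+L(\pi_1)}{1+2L(\pi_1)}$, and since every branch makes at most $2t$ queries, so does the mixture. The only genuinely delicate steps are the degree bound (which is what licenses the "$2t$") and pinning down the mixing weights; everything else is bookkeeping with Fourier coefficients.
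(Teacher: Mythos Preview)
Your proof is correct and follows essentially the same construction as the paper: realise $\pi_1$ as a mixture of deterministic parity routines with weights $2|\alpha_b|/(1+2L)$ plus a padding routine with weight $1/(1+2L)$, obtain the affine identity $\widehat{\pi}_1(x)=(L+\pi_1(x))/(1+2L)$, and read off the error. Your explicit separation of $\mathcal{A}_+$ is just the paper's $\mathcal{D}(0)$ (since $\chi_0\equiv 1$ and $\alpha_0\ge 0$), so the two mixtures are literally identical.
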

\begin{proof}
If a quantum algorithm applies $t$ queries, then $\deg\left ( \pi_{z} \right ) \leq 2t$ for every output $z$ \cite{beals2001quantum}.
Let $\mathcal{D}\left(b\right)$ be the deterministic classic algorithm which
outputs 
\begin{equation}\label{2teo5}\widehat{\pi}_{1}^{b}\left(x\right)=\frac{1}{2}+\sgn\left(\alpha_{b}\right)\left(\frac{\chi _{b}(x)}{2}\right),\end{equation} for input $x$, where $\sgn$ is the sign function and $|b| \leq 2t$. We consider a randomized algorithm $\mathcal{R}$
which simply selects either: (i) an algorithm  $\mathcal{D}\left(b\right)$, with
probability $\frac{2\left|\alpha_{b}\right|}{1+2L\left(\pi_{1}\right)};$
 or, (ii) an algorithm that outputs 0 for any $x$, with probability $\frac{1}{1+2L\left(\pi_{1}\right)}$. Notice that algorithm $\mathcal{R}$ is the composition of sub-simulations, as we represent in Figure \ref{f111}.
Since we denote by $\widehat{\pi}_{1}\left(x\right)$ the probability
of obtaining output $1$ given $x$ with $\mathcal{R}$, by Eq.~(\ref{2teo5}) we have
\begin{align}
\widehat{\pi}_{1}\left(x\right)&=\frac{\underset{b}{\sum}2\left|\alpha_{b}\right|\widehat{\pi}_{1}^{b}\left(x\right)}{1+2L\left(\pi_{1}\right)}\\
&=\frac{\underset{b}{\sum}\left|\alpha_{b}\right|+\underset{b}{\sum}\alpha_{b}\chi_{b}(x)}{1+2L\left(\pi_{1}\right)}.
\end{align} The algorithm $\mathcal{R}$ applies no more than $2t$ queries, 
since $\mathcal{D}\left(b\right)$ applies no more than $2t$ queries for each $|b| \leq 2t$.

Now, we must prove an upper bound for the error in the simulation. We divide such proof in two cases, when $f\left(x\right)=1$ and $f\left(x\right)=0$. If $f\left(x\right)=1$, then $\varepsilon\geq1-\pi_{1}\left(x\right)=1-\underset{b}{\sum}\alpha_{b}\chi_{b}(x)$. This implies that
\begin{align} 
1-\widehat{\pi}_{1}\left(x\right) &= 1-\frac{\left(L\left(\pi_{1}\right)+\underset{b}{\sum}\alpha_{b}\chi_{b}(x)\right)}{1+2L\left(\pi_{1}\right)}\\
&= \frac{1+L\left(\pi_{1}\right)-\underset{b}{\sum}\alpha_{b}\chi_{b}(x)}{1+2L\left(\pi_{1}\right)}\\
&\leq \tilde{\varepsilon}.
\end{align}
Analogously, if $f\left(x\right)=0$, then $\varepsilon\geq\pi_{1}\left(x\right)=\underset{b}{\sum}\alpha_{b}\chi_{b}(x)$
and this implies that \begin{equation}\widehat{\pi}_{1}\left(x\right)\leq\frac{\varepsilon+L\left(\pi_{1}\right)}{1+2L\left(\pi_{1}\right)}=\tilde{\varepsilon}.\end{equation}
\end{proof}

We described a classical simulation that imitates the output probability of a given quantum algorithm, but within a big error. Thus, the next theorem just gives a reduction of such error using probabilistic amplification.
\begin{theorem}
\label{1theo2}
Let $\mathcal{A}$ be a quantum algorithm that computes $f:S\rightarrow\left\{0,1\right\}$ for $S\subset\left\{ 0,1\right\} ^{n}$, with error $\varepsilon$ and $t$ queries. 
Then, there is a classical algorithm which computes $f$ within error
$\exp\left({-\frac{j}{2\left(1-\tilde{\varepsilon}\right)}\left(\frac{1}{2}-\tilde{\varepsilon}\right)^{2}}\right)$,
where $\tilde{\varepsilon}=\frac{\varepsilon+L\left(\pi_{1}\right)}{1+2L\left(\pi_{1}\right)}$
and using $2jt$ queries.
\end{theorem}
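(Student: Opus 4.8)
The plan is to use Theorem~\ref{1teo1} as a black box and boost its success probability by independent repetition and majority vote. First I would invoke Theorem~\ref{1teo1} to obtain a classical randomized algorithm $\mathcal{R}$ computing $f$ within error $\tilde{\varepsilon}=\frac{\varepsilon+L(\pi_{1})}{1+2L(\pi_{1})}$ using at most $2t$ queries. Under the usual bounded-error assumption $\varepsilon<1/2$, a one-line computation gives $\tilde{\varepsilon}<1/2$, so $\mathcal{R}$ has a genuine (if possibly small) bias toward the correct answer; this is the regime in which amplification is meaningful, and I would note that the stated bound is only informative there.

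Next I would define the amplified algorithm $\mathcal{R}^{(j)}$: on input $x$, run $\mathcal{R}$ independently $j$ times and output the value returned by a majority of the runs (breaking ties arbitrarily). Since each run consumes at most $2t$ queries and the runs are independent, $\mathcal{R}^{(j)}$ uses at most $2jt$ queries, matching the claimed query count. It remains to bound its error.

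For the error analysis, fix an input $x$ and let $v=f(x)$. Let $X$ count how many of the $j$ runs of $\mathcal{R}$ output $v$; then $X$ is a sum of independent Bernoulli variables, and we may assume each has success probability exactly $p:=1-\tilde{\varepsilon}>1/2$ (degrading $\mathcal{R}$ only increases the error we are bounding), so $\mu:=\mathbb{E}[X]=jp$. The majority vote errs on $x$ only if $X\leq j/2$; writing $j/2=(1-\delta)\mu$ with $\delta=1-\frac{1}{2p}\in(0,1)$ and applying the multiplicative Chernoff bound $\Pr[X\leq(1-\delta)\mu]\leq\exp(-\mu\delta^{2}/2)$ yields
\begin{equation*}
\Pr\!\left[\mathcal{R}^{(j)}\text{ errs on }x\right]\leq\exp\!\left(-\frac{jp}{2}\left(1-\frac{1}{2p}\right)^{2}\right)=\exp\!\left(-\frac{j(2p-1)^{2}}{8p}\right).
\end{equation*}
Substituting $p=1-\tilde{\varepsilon}$, so that $2p-1=2\left(\frac{1}{2}-\tilde{\varepsilon}\right)$, and simplifying gives exactly $\exp\!\left(-\frac{j}{2(1-\tilde{\varepsilon})}\left(\frac{1}{2}-\tilde{\varepsilon}\right)^{2}\right)$, as required.

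I expect no conceptual obstacle here; the only delicate points are bookkeeping: confirming $\tilde{\varepsilon}<1/2$ so the Chernoff regime applies, selecting the precise form of the Chernoff lower-tail bound that produces the constant $\tfrac{1}{2}$ in the exponent, and verifying the algebraic simplification from $\frac{(2p-1)^{2}}{8p}$ to $\frac{(\frac{1}{2}-\tilde{\varepsilon})^{2}}{2(1-\tilde{\varepsilon})}$.
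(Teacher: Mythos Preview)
Your proposal is correct and follows essentially the same approach as the paper: invoke Theorem~\ref{1teo1} to obtain $\mathcal{R}$ with error $\tilde{\varepsilon}$ and $2t$ queries, amplify by running it $j$ times and taking the majority, and bound the failure probability via the multiplicative Chernoff lower-tail inequality with $p=1-\tilde{\varepsilon}$ and $\delta=1-\tfrac{1}{2p}$ (the paper writes this as $\beta$ and cites the Angluin--Valiant form). Your explicit check that $\tilde{\varepsilon}<1/2$ when $\varepsilon<1/2$, and the remark that one may assume success probability exactly $p$ by monotonicity, are small clarifications the paper leaves implicit.
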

\begin{proof}
We use a corollary of Chernoff bound~\cite{angluin1979fast}. For $j,p,\beta$ such
that $0\leq p\leq1$, $0\leq\beta\leq1$ and $0\leq j$, we have
\begin{equation}
\label{1eq1}
\sum_{i=0}^{m}
\binom{j}{i}
p^{i}\left(1-p\right)^{j-i}\leq \exp{\left(-\beta^{2}jp/2\right)},
\end{equation}
where $m = \left\lfloor \left(1-\beta\right)jp\right\rfloor$.

We define an algorithm $\mathcal{\widehat{R}}$ using the classical
algorithm $\mathcal{R}$ within error $\tilde{\varepsilon}$ from Theorem~\ref{1teo1}. Algorithm $\mathcal{\widehat{R}}$ consists in applying probability amplification
on $\mathcal{R}$, that is, executing algorithm $\mathcal{R}$ $j$ times and then selecting the most
frequent result. Define $X$ as the random variable that represents
the number of correct answers. Taking $\beta=1-\frac{1}{2\left(1-\tilde{\varepsilon}\right)}$
and $p=\left(1-\tilde{\varepsilon}\right)$ in Eq.~(\ref{1eq1}), then the
error in $\mathcal{\widehat{R}}$ is upper-bounded by
\begin{equation}
\label{1eq11}
\mathbb{P}\left[X\leq\left\lfloor \frac{j}{2}\right\rfloor \right]\leq \exp\left({-\frac{j}{2\left(1-\tilde{\varepsilon}\right)}\left(\frac{1}{2}-\tilde{\varepsilon}\right)^{2}}\right).
\end{equation}
\end{proof}

\subsection{Polynomial simulation}

The same technique can be applied for simulating a polynomial $p\left(x\right)$ approximating a function, instead of simulating the output probabilities of a given quantum algorithm. In this sense, Theorems \ref{1teo1} and \ref{1theo2} can be generalized. In order to formulate the corresponding theorems, we consider the usual notion of polynomial approximation: 

\begin{definition} A polynomial $p:\mathbb{R}^n\rightarrow \mathbb{R}$ $\varepsilon$-approximates a function $f:S\rightarrow\left\{ 0,1\right\} $
for $S\subset\left\{ 0,1\right\} ^{n}$, if $\left | p\left ( x \right )-f\left ( x \right ) \right |\leq \varepsilon$ for all $x \in \left \{ 0,1 \right \}^n$.
\end{definition}

\begin{theorem}\label{pol1}
Let $p:\mathbb{R}^n\rightarrow \mathbb{R}$ be a polynomial that $\varepsilon$-approximates $f:S\rightarrow\left\{ 0,1\right\} $
for $S\subset\left\{ 0,1\right\} ^{n}$. If $p$ has a degree equal or less than $2t$, then there is a classical algorithm which computes $f$ within error $$\tilde{\varepsilon}=\frac{\varepsilon+L\left(p\right)}{1+2L\left(p\right)}$$ and $2t$ queries.
\end{theorem}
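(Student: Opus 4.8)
The plan is to mimic the proof of Theorem~\ref{1teo1} almost verbatim, replacing the output probability $\pi_1$ by the approximating polynomial $p$. First I would write $p$ in the Fourier basis, $p = \sum_b \alpha_b \chi_b$, which is legitimate since $p$ restricted to $\{0,1\}^n$ is a real function on the hypercube and the degree hypothesis $\deg(p)\le 2t$ guarantees $\alpha_b = 0$ whenever $|b| > 2t$. Then, exactly as before, for each $b$ with $|b|\le 2t$ I would define the deterministic algorithm $\mathcal{D}(b)$ that queries the at most $2t$ bits of $x$ indexed by $b$ and outputs $1$ with value $\widehat{\pi}_1^b(x) = \tfrac12 + \sgn(\alpha_b)\bigl(\tfrac{\chi_b(x)}{2}\bigr)$, using no more than $2t$ queries.

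Next I would assemble the randomized algorithm $\mathcal{R}$ that picks $\mathcal{D}(b)$ with probability $\tfrac{2|\alpha_b|}{1+2L(p)}$ and the all-zero algorithm with probability $\tfrac{1}{1+2L(p)}$; these weights sum to one by the definition of $L(p)$, and $\mathcal{R}$ uses at most $2t$ queries. The induced acceptance probability is
\begin{equation}
\widehat{\pi}_1(x) = \frac{\sum_b 2|\alpha_b|\,\widehat{\pi}_1^b(x)}{1+2L(p)} = \frac{L(p) + \sum_b \alpha_b \chi_b(x)}{1+2L(p)} = \frac{L(p) + p(x)}{1+2L(p)}.
\end{equation}
Then I would bound the error in the two cases. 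If $f(x)=1$, the approximation hypothesis gives $p(x)\ge 1-\varepsilon$, so $1-\widehat{\pi}_1(x) = \frac{1+L(p)-p(x)}{1+2L(p)} \le \frac{\varepsilon + L(p)}{1+2L(p)} = \tilde\varepsilon$. If $f(x)=0$, then $p(x)\le\varepsilon$, so $\widehat{\pi}_1(x) = \frac{L(p)+p(x)}{1+2L(p)} \le \frac{\varepsilon+L(p)}{1+2L(p)} = \tilde\varepsilon$. This yields the claimed error bound with $2t$ queries.

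The structure is entirely routine; the only point needing a little care is the translation between a polynomial on $\mathbb{R}^n$ and its Fourier expansion on $\{0,1\}^n$. Concretely, a monomial $\prod_{i\in T} x_i$ of degree $|T|$ expands into Fourier characters $\chi_b$ only for $b$ with $b \subseteq T$ (using $x_i = \tfrac{1-\chi_{e_i}(x)}{2}$), so the degree bound is preserved and no $\alpha_b$ with $|b|>2t$ appears; hence each $\mathcal{D}(b)$ reads at most $2t$ bits. The other mild subtlety is that $L(p)$ here is the Fourier $1$-norm of the multilinear representative of $p$ on the cube, which is well defined and finite. I expect no genuine obstacle — this is the polynomial analogue of Theorem~\ref{1teo1}, and one could even phrase Theorem~\ref{1teo1} as a corollary of this statement applied to the approximating polynomial $\pi_1$ furnished by Ref.~\cite{beals2001quantum}.
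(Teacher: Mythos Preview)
Your proposal is correct and matches the paper's approach exactly: the paper simply states that the proof is similar to that of Theorem~\ref{1teo1}, and what you have written is precisely that adaptation, with $\pi_1$ replaced by $p$ and the degree bound on $p$ supplying the role previously played by the Beals~\emph{et al.} result.
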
\begin{proof}
The proof is similar to the proof of Theorem \ref{1teo1}. 
\end{proof} 

We similarly introduce the corresponding reduction error theorem.

\begin{theorem}
\label{pol2}
Let $p$ be a polynomial that $\varepsilon$-approximates $f:S\rightarrow\left\{ 0,1\right\} $
for $S\subset\left\{ 0,1\right\} ^{n}$.
If $p$ has a degree equal or less than $2t$, then there is a classical algorithm which computes $f$ within error
$\exp\left({-\frac{j}{2\left(1-\tilde{\varepsilon}\right)}\left(\frac{1}{2}-\tilde{\varepsilon}\right)^{2}}\right)$,
where $\tilde{\varepsilon}=\frac{\varepsilon+L\left(p\right)}{1+2L\left(p\right)}$
and using $2jt$ queries.
\end{theorem}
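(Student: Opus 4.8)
The plan is to reduce Theorem~\ref{pol2} to Theorem~\ref{pol1} combined with standard probability amplification, mirroring the proof of Theorem~\ref{1theo2} almost verbatim. First I would invoke Theorem~\ref{pol1}: since $p$ has degree at most $2t$ and $\varepsilon$-approximates $f$, there is a classical algorithm $\mathcal{R}$ that computes $f$ within error $\tilde{\varepsilon}=\frac{\varepsilon+L(p)}{1+2L(p)}$ using $2t$ queries. This $\mathcal{R}$ is the base routine to be amplified.

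Next I would define $\widehat{\mathcal{R}}$ to run $\mathcal{R}$ independently $j$ times on the given input and output the most frequent answer; this costs $2jt$ queries, as claimed. Let $X$ be the number of correct runs, a sum of $j$ independent Bernoulli trials each succeeding with probability $1-\tilde{\varepsilon}$; the amplified algorithm errs only when $X\le\lfloor j/2\rfloor$. I would then apply the Chernoff-type bound of Eq.~(\ref{1eq1}) with success probability $1-\tilde{\varepsilon}$ and $\beta=1-\frac{1}{2(1-\tilde{\varepsilon})}$, observing that $\lfloor(1-\beta)(1-\tilde{\varepsilon})j\rfloor=\lfloor j/2\rfloor$, so the left-hand side of Eq.~(\ref{1eq1}) is precisely $\mathbb{P}\!\left[X\le\lfloor j/2\rfloor\right]$, while the exponent $\beta^{2}(1-\tilde{\varepsilon})j/2$ simplifies to $\frac{j}{2(1-\tilde{\varepsilon})}\left(\frac{1}{2}-\tilde{\varepsilon}\right)^{2}$. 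This yields the stated error bound.

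The only point needing care --- the nearest thing to an obstacle --- is that the amplification step is meaningful only when $\tilde{\varepsilon}<\frac{1}{2}$ (equivalently $\varepsilon<\frac{1}{2}$), which is what guarantees $0\le\beta\le1$ and $1-\tilde{\varepsilon}>\frac{1}{2}$, so that the majority vote is biased toward the correct answer and the inequality analogous to Eq.~(\ref{1eq11}) applies. Under this standing hypothesis the argument is word-for-word that of Theorem~\ref{1theo2}, since nothing there used that the output probability came from a quantum algorithm --- only the degree bound and the conclusion of the corresponding base theorem. Accordingly I would simply record the substitution $L(\pi_{1})\leftarrow L(p)$ and remark that the proof is identical to that of Theorem~\ref{1theo2}.
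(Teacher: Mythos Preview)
Your proposal is correct and follows exactly the approach the paper takes: the paper's proof simply says to repeat the argument of Theorem~\ref{1theo2}, invoking Theorem~\ref{pol1} in place of Theorem~\ref{1teo1}. Your write-up is in fact more detailed than the paper's own one-line proof, and your remark about the standing hypothesis $\tilde{\varepsilon}<\tfrac{1}{2}$ is a point the paper leaves implicit.
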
\begin{proof}
The proof is similar to the proof of Theorem \ref{1theo2}, but reducing error in Theorem \ref{pol1} instead Theorem \ref{1teo1}. 
\end{proof}

\section{Upper bounds for quantum speed-up}
\label{uper}

In this section, we describe conditions which can slow down our simulation. Quantum speed-up only occurs when no classical simulation is efficient enough, thus any condition that makes difficult any classical simulation is a necessary condition for this computational gain. In this sense, we measure the quantum speed-up for a given quantum algorithm by the quotient $\frac{R}{t}$, where (i) such quantum algorithm applies $t$ queries, and (ii) an optimal classical algorithm executes the same computational task in $R$ queries. This quotient can be interpreted as how much faster is a quantum algorithm in relation to the best classical algorithm.

The following theorem, which upper-bounds quantum speed-up using Fourier $1$-norm, is the core of our results. It basically shows how high values for Fourier $1$-norm are related to the speed quotient that we denoted.

\begin{theorem}\label{1theo3}
Consider $D\subset \left\{ 0,1\right\}^{n}$ and a function $f:D\rightarrow\left\{ 0,1\right\} $
 that is computed within error $\varepsilon>0$ and $t$ queries,
by a quantum query algorithm. If we define
\begin{equation}
F_{\varepsilon}\left(l\right)=\left\lceil \frac{-16\ln\left(\varepsilon\right)\left(1+l\right)\left(1+l-\varepsilon\right)}{\left(1-2\varepsilon\right)^{2}}\right\rceil, 
\end{equation} then
\begin{equation}\label{1eq66}
\frac{R_{\varepsilon}\left(f\right)}{t}\leq F_{\varepsilon}\left(L\left(\pi_{1}\right)\right),
\end{equation}
where (i) $R_{\varepsilon}\left(f\right)$ denotes the minimum number of queries that are necessary for computing $f$ within error $\varepsilon$ by a randomized decision tree (See~\cite{buhrman2002complexity} for a detailed definition.) and (ii) $\pi_{1}(x)$ is the probability of the quantum algorithm returning output 1 for a given input $x$.
\end{theorem}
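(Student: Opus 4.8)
The plan is to chain together Theorem~\ref{1theo2} (error reduction for the classical simulation) with a definition-chase that turns ``simulation with small error'' into an actual bound on $R_{\varepsilon}(f)$. First I would invoke Theorem~\ref{1theo2}: the quantum algorithm uses $t$ queries and has output probability $\pi_1$, so for every repetition count $j$ there is a randomized decision tree using $2jt$ queries whose error is at most $\exp\!\left(-\frac{j}{2(1-\tilde\varepsilon)}\left(\tfrac12-\tilde\varepsilon\right)^2\right)$, where $\tilde\varepsilon = \frac{\varepsilon + L(\pi_1)}{1+2L(\pi_1)}$. The goal is then to choose $j$ as small as possible so that this error drops to $\varepsilon$, because that choice of $j$ makes $2jt$ an upper bound on $R_\varepsilon(f)$.

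The core computation is solving $\exp\!\left(-\frac{j}{2(1-\tilde\varepsilon)}\left(\tfrac12-\tilde\varepsilon\right)^2\right) \le \varepsilon$ for $j$, i.e. $j \ge \frac{-2\ln(\varepsilon)(1-\tilde\varepsilon)}{\left(\frac12-\tilde\varepsilon\right)^2}$, so one takes $j = \left\lceil \frac{-2\ln(\varepsilon)(1-\tilde\varepsilon)}{\left(\frac12-\tilde\varepsilon\right)^2} \right\rceil$. Then $R_\varepsilon(f) \le 2jt$, so $\frac{R_\varepsilon(f)}{t} \le 2j \le \left\lceil \frac{-4\ln(\varepsilon)(1-\tilde\varepsilon)}{\left(\frac12-\tilde\varepsilon\right)^2} \right\rceil$ (absorbing the factor $2$ inside the ceiling, which only weakens the bound). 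The remaining work is purely algebraic: substitute $\tilde\varepsilon = \frac{\varepsilon+L(\pi_1)}{1+2L(\pi_1)}$ and simplify. Writing $l = L(\pi_1)$, one computes $1-\tilde\varepsilon = \frac{1+l-\varepsilon}{1+2l}$ and $\frac12 - \tilde\varepsilon = \frac{1-2\varepsilon}{2(1+2l)}$, hence $\left(\frac12-\tilde\varepsilon\right)^2 = \frac{(1-2\varepsilon)^2}{4(1+2l)^2}$. Plugging in gives $\frac{(1-\tilde\varepsilon)}{\left(\frac12-\tilde\varepsilon\right)^2} = \frac{4(1+l)(1+l-\varepsilon)}{1-2\varepsilon} \cdot \frac{1+2l}{1+2l} \cdot \frac{1}{1}$ — more carefully, $= \frac{(1+l-\varepsilon)/(1+2l)}{(1-2\varepsilon)^2/(4(1+2l)^2)} = \frac{4(1+2l)(1+l-\varepsilon)}{(1-2\varepsilon)^2}$, and multiplying by $-4\ln(\varepsilon)$ yields $\frac{-16\ln(\varepsilon)(1+2l)(1+l-\varepsilon)}{(1-2\varepsilon)^2}$, which differs from the claimed $F_\varepsilon(l)$ only in the factor $(1+2l)$ versus $(1+l)$; since $1+l \le 1+2l$ the stated $F_\varepsilon(l)$ is an upper bound, so the displayed inequality~(\ref{1eq66}) follows. (I would double-check whether the paper intends $(1+l)$ or $(1+2l)$ here and adjust the constant accordingly; either way the bound holds after replacing one factor by the larger one.)

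The main obstacle is bookkeeping rather than any conceptual difficulty: one must verify that Theorem~\ref{1theo2}'s error bound is genuinely monotone decreasing in $j$ over the relevant range and that the exponent is negative — this needs $\tilde\varepsilon < \tfrac12$, which follows from $\varepsilon < \tfrac12$ (a quantum algorithm computing $f$ within error $\varepsilon$ is only meaningful for $\varepsilon < \tfrac12$, and then $\tilde\varepsilon = \frac{\varepsilon+l}{1+2l} < \frac{1/2 + l}{1+2l} = \tfrac12$). One also needs $\varepsilon > 0$ so that $\ln(\varepsilon)$ is defined and negative, which is given in the hypothesis. With these sign conditions checked, the ceiling manipulation and the substitution of $\tilde\varepsilon$ are routine, and the conclusion $\frac{R_\varepsilon(f)}{t} \le F_\varepsilon(L(\pi_1))$ drops out.
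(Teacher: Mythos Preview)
Your approach is exactly the paper's: invoke Theorem~\ref{1theo2}, solve
\[
\varepsilon=\exp\!\left(-\frac{j}{2(1-\tilde\varepsilon)}\Bigl(\tfrac12-\tilde\varepsilon\Bigr)^{2}\right)
\]
for $j$, and then use $R_\varepsilon(f)/t\le\lceil 2j\rceil$. The paper's own proof is two lines and omits the algebra you carry out.

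Your algebra is correct: with $l=L(\pi_1)$ one gets $1-\tilde\varepsilon=(1+l-\varepsilon)/(1+2l)$ and $\tfrac12-\tilde\varepsilon=(1-2\varepsilon)/(2(1+2l))$, hence
\[
2j=\frac{-16\ln(\varepsilon)\,(1+2l)(1+l-\varepsilon)}{(1-2\varepsilon)^{2}},
\]
with the factor $(1+2l)$, not the $(1+l)$ printed in the statement of $F_\varepsilon$. This is almost certainly a typo in the paper, since the authors do not display the substitution.

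The one genuine slip in your write-up is the sentence ``since $1+l\le 1+2l$ the stated $F_\varepsilon(l)$ is an upper bound, so the displayed inequality~(\ref{1eq66}) follows.'' That inequality points the wrong way: you have proved $R_\varepsilon(f)/t\le\lceil\text{expression with }(1+2l)\rceil$, and the printed $F_\varepsilon(l)$ with $(1+l)$ is \emph{smaller} (for $l>0$), so it does \emph{not} follow as an upper bound from what you derived. Your parenthetical hedge is the right instinct; the clean fix is simply to note that the computation yields the factor $(1+2l)$ and that the theorem should be read with that constant (or, equivalently, that the definition of $F_\varepsilon$ contains a typo).
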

\begin{proof}
Suppose that we simulate the quantum algorithm using the randomized
algorithm of Theorem \ref{1theo2} and  promising an error that does not exceed $\varepsilon$ for $f$.
Thereby, from Eq.~(\ref{1eq11}), we have 
\begin{equation}
\label{1eq6}
\varepsilon = \exp\left( -\frac{j}{2\left(1-\tilde{\varepsilon}\right)}\left(\frac{1}{2}-\tilde{\varepsilon}\right)^{2} \right).
\end{equation}
As $\frac{R_{\varepsilon}\left(f\right)}{t}\leq\left\lceil 2j\right\rceil $,
if we obtain $j$ from Eq.~(\ref{1eq6}) we have Eq.~(\ref{1eq66}).
\end{proof}

Last theorem has consequences in exact quantum complexity, as we find in the following corollary:

\begin{corollary}\label{cor}Consider a total function $f:D\rightarrow\left\{ 0,1\right\}$, then 
\begin{equation}\label{1eq7}
\frac{R_{\varepsilon}\left(f\right)}{Q_{E}\left(f\right)}\leq F_{\varepsilon}\left(L\left(f\right)\right),
\end{equation} where $Q_{E}\left(f\right)$ denotes the number of queries applied by an exact quantum query algorithm computing $f$.\end{corollary}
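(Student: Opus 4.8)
The plan is to reduce Corollary~\ref{cor} directly to Theorem~\ref{1theo3} by observing that an exact quantum algorithm is, in particular, a bounded-error algorithm whose output probability coincides with the function it computes.

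First I would fix $\varepsilon>0$ and let $\mathcal{A}$ be an exact quantum query algorithm computing $f$ with $t=Q_{E}(f)$ queries. By the definition of exact computation, $\pi_{f(x)}(x)=1$ for every $x\in D$; since $f$ is $\{0,1\}$-valued, this forces $\pi_{1}(x)=f(x)$ for all $x\in D$. Here I would invoke the hypothesis that $f$ is total, so that $D=\{0,1\}^{n}$ and therefore $\pi_{1}$ and $f$ agree as functions on the whole Boolean cube. Consequently their Fourier expansions in the basis $\{\chi_{b}\}$ are identical coefficient by coefficient, and in particular $L(\pi_{1})=L(f)$.

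Next I would note that $\mathcal{A}$ trivially computes $f$ within error $\varepsilon$ (indeed within error $0\le\varepsilon$) using $t=Q_{E}(f)$ queries, so the hypotheses of Theorem~\ref{1theo3} are satisfied with this choice of $\varepsilon$ and $t$. Applying that theorem gives
$$\frac{R_{\varepsilon}(f)}{Q_{E}(f)}=\frac{R_{\varepsilon}(f)}{t}\le F_{\varepsilon}\bigl(L(\pi_{1})\bigr),$$
and substituting $L(\pi_{1})=L(f)$ yields Eq.~(\ref{1eq7}).

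The only delicate point is the identification $L(\pi_{1})=L(f)$: it rests on $f$ being total, so that equality of $\pi_{1}$ and $f$ on the domain $D$ upgrades to equality of functions on all of $\{0,1\}^{n}$, hence to equality of Fourier coefficients and of the Fourier $1$-norm. If $f$ were only partial, $\pi_{1}$ could differ arbitrarily from a fixed total extension of $f$ outside $D$, and $L(\pi_{1})$ need not be controlled by $L(f)$ — which is exactly why the corollary is restricted to total functions. Everything else is a bookkeeping substitution into the bound already established in Theorem~\ref{1theo3}.
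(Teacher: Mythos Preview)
Your proposal is correct and follows exactly the paper's approach: take an optimal exact quantum algorithm so that $t=Q_{E}(f)$ and $\pi_{1}=f$, then apply Theorem~\ref{1theo3}. Your write-up is in fact more careful than the paper's one-line proof, since you spell out why totality is needed to upgrade $\pi_{1}=f$ on $D$ to equality of Fourier $1$-norms.
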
\begin{proof}
If a quantum query algorithm is exact, optimal and computes a total function then $t=Q_{E}\left(f\right)$ and $\pi_{1}=f$.\end{proof}

Theorem \ref{1theo3} can also be formulated for approximate polynomials, as follows:

\begin{theorem}\label{pol3}
Consider $D\subset \left\{ 0,1\right\}^{n}$ and a function $f:D\rightarrow\left\{ 0,1\right\}$
 that is $\varepsilon$-approximated by a polynomial $p:\mathbb{R}^n\rightarrow \mathbb{R}$. If $\deg\left(p\right)\leq 2t,$ then
\begin{equation}\label{1eq8}
\frac{R_{\varepsilon}\left(f\right)}{2t}\leq F_{\varepsilon}\left(L\left(p\right)\right).
\end{equation}\end{theorem}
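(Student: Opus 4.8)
The plan is to run the argument of Theorem~\ref{1theo3} essentially verbatim, but starting from the polynomial simulation of Theorem~\ref{pol2} instead of the quantum-algorithm simulation of Theorem~\ref{1theo2}. Write $l=L(p)$ and $\tilde\varepsilon=\frac{\varepsilon+l}{1+2l}$. Since $\deg(p)\le 2t$ and $p$ $\varepsilon$-approximates $f$, Theorem~\ref{pol2} supplies, for each positive integer $j$, a randomized decision tree that computes $f$ with error at most $\exp\!\left(-\frac{j}{2(1-\tilde\varepsilon)}\left(\frac12-\tilde\varepsilon\right)^{2}\right)$ using $2jt$ queries. This error bound is decreasing in $j$, so it suffices to choose $j$ as the least integer making it at most $\varepsilon$.

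The next step is to make that choice explicit. Solving $\exp\!\left(-\frac{j}{2(1-\tilde\varepsilon)}\left(\frac12-\tilde\varepsilon\right)^{2}\right)\le\varepsilon$ gives $j\ge j_{0}:=\frac{-2(1-\tilde\varepsilon)\ln\varepsilon}{\left(\frac12-\tilde\varepsilon\right)^{2}}$; here one uses $\varepsilon<\frac12$, which is equivalent to $\tilde\varepsilon<\frac12$ and guarantees $j_{0}>0$ (as in Theorem~\ref{1theo3}, this hypothesis is implicit, since otherwise the denominator $(1-2\varepsilon)^{2}$ in $F_{\varepsilon}$ vanishes and the amplification step is vacuous). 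A short computation gives the identities $1-\tilde\varepsilon=\frac{1+l-\varepsilon}{1+2l}$ and $\frac12-\tilde\varepsilon=\frac{1-2\varepsilon}{2(1+2l)}$, from which $j_{0}=\frac{-8\ln(\varepsilon)\,(1+2l)(1+l-\varepsilon)}{(1-2\varepsilon)^{2}}$.

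Finally I would use the elementary bound $1+2l\le 2(1+l)$ (valid for $l\ge 0$) together with $-\ln\varepsilon>0$ to obtain $j_{0}\le\frac{-16\ln(\varepsilon)\,(1+l)(1+l-\varepsilon)}{(1-2\varepsilon)^{2}}$, hence $\lceil j_{0}\rceil\le F_{\varepsilon}(l)=F_{\varepsilon}(L(p))$ by monotonicity of the ceiling. Taking $j=\lceil j_{0}\rceil$ in Theorem~\ref{pol2} yields a randomized decision tree computing $f$ within error $\varepsilon$ using $2\lceil j_{0}\rceil t$ queries, so $R_{\varepsilon}(f)\le 2\lceil j_{0}\rceil t$ and therefore $\frac{R_{\varepsilon}(f)}{2t}\le\lceil j_{0}\rceil\le F_{\varepsilon}(L(p))$, which is Eq.~(\ref{1eq8}).

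No genuinely new idea is needed beyond the proofs of Theorems~\ref{pol2} and~\ref{1theo3}; the points requiring care are (i) that the error bound of Theorem~\ref{pol2} decreases in $j$, so that replacing the real threshold $j_{0}$ by the integer $\lceil j_{0}\rceil$ only improves it, (ii) keeping the ceilings consistent with the definition of $R_{\varepsilon}(f)$ as a minimum over randomized decision trees, and (iii) the algebraic simplification plus the single inequality $1+2l\le 2(1+l)$ that turns the exact threshold into the closed form $F_{\varepsilon}$. I expect step (iii) to be the only place where anything could go subtly wrong, and it is routine.
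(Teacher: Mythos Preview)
Your proof is correct and follows exactly the route the paper indicates: it reruns the argument of Theorem~\ref{1theo3} with Theorem~\ref{pol2} in place of Theorem~\ref{1theo2}, and your explicit algebra (including the step $1+2l\le 2(1+l)$ and the monotonicity of the ceiling) merely spells out what the paper leaves implicit. Nothing further is needed.
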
 \begin{proof}
Similar proof as for Theorem \ref{1theo3}, but applying Theorem \ref{pol2}.
\end{proof}

We may expect from Fourier $1$-norm that low values must imply problems that are easily simulated by classical means. Theorem~\ref{1theo3} guarantees that low values of the Fourier $1$-norm in relation to $t$ imply such efficient classical simulation. 

Notice that Fourier $1$-norm is defined on the output probability. Then, an explicit expression for the Fourier $1$-norm as a function of the algorithm itself may be useful.
Let $k,h$ be vectors in $\mathbb{Z}^{t}_{n+1}$ and $|b|\leq 2t$. We denote $\left(k,h\right)\sim b$,
if 
\[(-1)^{\underset{i}{\sum}x_{k_{i}}+\underset{i}{\sum}x_{h_{i}}}=\chi_{b}(x).\] 
Thus, for a $t$-query algorithm, we have the expression
\begin{equation}\label{lup1}
L\left(\pi_{1}\right)=\underset{|b|\leq 2t}{\sum}\left|\underset{\begin{array}{c}
\left(k,h\right)\sim b
\end{array}}{\sum}\left\langle \Psi\left(k\right)\right|P_{1}\left|\Psi\left(h\right)\right\rangle \right|,
\end{equation} by applying Lemma~\ref{desc1}.
Considering that each pair $\left(k,h\right)$ is related to a unique $b$, we can obtain the following upper bound for $L\left(\pi_{1}\right)$,
\begin{equation}\label{lup2}
\widetilde{L}\left(\pi_{1}\right)=\underset{k}{\sum}\underset{h}{\sum}\left|\left\langle \Psi\left(k\right)\right|P_{1}\left|\Psi\left(h\right)\right\rangle \right|.
\end{equation}

These expressions are based on the state decomposition given by Definition~\ref{debs}. Lemma~\ref{desc1} implies that each quantum algorithm has its own state decomposition, thus next theorem relates metrics on such set of vectors with the gap between quantum and classical query.
\begin{theorem}
Using the same hypothesis of Theorem~\ref{1theo3}, denoting 
$\#S$
as the cardinality of set $S$ and defining $d=\#\left\{ k:\left|\Psi\left(k\right)\right\rangle \neq0\right\}
 $, we have
\begin{equation}\label{1eq662}
\frac{R_{\varepsilon}\left(f\right)}{t}\leq F_{\varepsilon}\left(\widetilde{L}\left(\pi_{1}\right)\right),
\end{equation} 
\begin{equation} \label{up1}
\frac{R_{\varepsilon}\left(f\right)}{t}\leq F_{\varepsilon}\left(\left(\underset{k}{\sum}\left\Vert \left|\Psi\left(k\right)\right\rangle \right\Vert \right)^{2}\right),
\end{equation}
\begin{equation} \label{up2}
\frac{R_{\varepsilon}\left(f\right)}{t}\leq F_{\varepsilon}\left(d\right),
\end{equation} 
and
\begin{equation} \label{up3}
\frac{R_{\varepsilon}\left(f\right)}{t}\leq F_{\varepsilon}\left(\frac{1}{\underset{k}{\min}\,\left\langle \Psi\left(k\right)\right|\left.\Psi\left(k\right)\right\rangle }\right).
\end{equation}

\end{theorem}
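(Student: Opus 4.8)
The plan is to obtain all four inequalities from Theorem~\ref{1theo3} by exploiting monotonicity of $F_{\varepsilon}$ together with a short chain of upper bounds for $L(\pi_{1})$ that is already half-assembled in Eqs.~(\ref{lup1})--(\ref{lup2}). The first step is to observe that, on the range of $\varepsilon$ in which $F_{\varepsilon}$ is used, the map $l\mapsto F_{\varepsilon}(l)$ is non-decreasing on $[0,\infty)$: the constants $-16\ln(\varepsilon)$ and $(1-2\varepsilon)^{-2}$ are positive, and $(1+l)(1+l-\varepsilon)$ has derivative $2(1+l)-\varepsilon>0$ for $l\ge 0$, while $\lceil\cdot\rceil$ preserves monotonicity. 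Hence any quantity $\ell\ge L(\pi_{1})$ satisfies $F_{\varepsilon}(L(\pi_{1}))\le F_{\varepsilon}(\ell)$, and combined with Theorem~\ref{1theo3} this gives $R_{\varepsilon}(f)/t\le F_{\varepsilon}(\ell)$. So it is enough to check that $\widetilde{L}(\pi_{1})$, $\bigl(\sum_{k}\|\ket{\Psi(k)}\|\bigr)^{2}$, $d$, and $1/\min_{k}\braket{\Psi(k)}{\Psi(k)}$ each dominate $L(\pi_{1})$.

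For~(\ref{1eq662}) I would use $L(\pi_{1})\le\widetilde{L}(\pi_{1})$, which follows from Eqs.~(\ref{lup1})--(\ref{lup2}) by the triangle inequality applied within each block $\{(k,h):(k,h)\sim b\}$ and the fact that every pair $(k,h)$ lies in exactly one such block. For~(\ref{up1}) I would bound each summand of $\widetilde{L}(\pi_{1})$ by Cauchy--Schwarz: since $P_{1}$ is an orthogonal projector, $|\langle\Psi(k)|P_{1}|\Psi(h)\rangle|=|\langle P_{1}\Psi(k)|P_{1}\Psi(h)\rangle|\le\|P_{1}\ket{\Psi(k)}\|\,\|P_{1}\ket{\Psi(h)}\|\le\|\ket{\Psi(k)}\|\,\|\ket{\Psi(h)}\|$, and summing over $k$ and $h$ gives $\widetilde{L}(\pi_{1})\le\bigl(\sum_{k}\|\ket{\Psi(k)}\|\bigr)^{2}$.

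The remaining two bounds rest on the normalization identity $\sum_{k}\|\ket{\Psi(k)}\|^{2}=1$, which I would prove by peeling off projectors one level at a time: for any $\ket{\phi}$ and fixed $j$, $\sum_{a_{j}}\|\widetilde{P}_{a_{j}}^{j}\ket{\phi}\|^{2}=\sum_{a_{j}}\bra{\phi}\widetilde{P}_{a_{j}}^{j}\ket{\phi}=\braket{\phi}{\phi}$ because $\{\widetilde{P}_{k}^{j}\}_{k}$ is a CSOP of orthogonal projectors, and iterating this from $j=t$ down to $j=0$ in Eq.~(\ref{ddesc}) collapses $\sum_{k}\|\ket{\Psi(k)}\|^{2}$ to $\braket{\Psi}{\Psi}=1$. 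Then Cauchy--Schwarz over the $d$ indices with $\ket{\Psi(k)}\neq 0$ gives $\bigl(\sum_{k}\|\ket{\Psi(k)}\|\bigr)^{2}\le d\sum_{k}\|\ket{\Psi(k)}\|^{2}=d$, yielding~(\ref{up2}) through~(\ref{up1}); and $d\cdot\min_{k}\braket{\Psi(k)}{\Psi(k)}\le\sum_{k}\|\ket{\Psi(k)}\|^{2}=1$ (minimum over the nonzero vectors) gives $d\le 1/\min_{k}\braket{\Psi(k)}{\Psi(k)}$, hence~(\ref{up3}) through~(\ref{up2}).

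I expect no serious obstacle; the only points needing care are fixing the range of $\varepsilon$ on which $F_{\varepsilon}$ is genuinely monotone, and stating the convention in~(\ref{up3}) that the minimum is over indices $k$ with $\ket{\Psi(k)}\neq 0$ (otherwise the right-hand side is $+\infty$ and the inequality is vacuous). The one structurally essential ingredient is the telescoping argument for $\sum_{k}\|\ket{\Psi(k)}\|^{2}=1$, where the CSOP property from Definition~\ref{debs} is used.
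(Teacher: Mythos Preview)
Your proposal is correct and follows essentially the same route as the paper: use monotonicity of $F_{\varepsilon}$ together with the chain
\[
L(\pi_{1})\;\le\;\widetilde{L}(\pi_{1})\;\le\;\Bigl(\sum_{k}\|\ket{\Psi(k)}\|\Bigr)^{2}\;\le\;d\;\le\;\frac{1}{\min_{k}\braket{\Psi(k)}{\Psi(k)}},
\]
and then feed each bound into Theorem~\ref{1theo3}.

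The one substantive difference is how the normalization is obtained. The paper appeals to Lemma~\ref{desc1} to get $\braket{\Psi}{\Psi}=\sum_{k,h}\braket{\Psi(k)}{\Psi(h)}=1$, while you prove the diagonal identity $\sum_{k}\|\ket{\Psi(k)}\|^{2}=1$ directly by telescoping the CSOP property level by level. Your version is precisely what the Cauchy--Schwarz step $(\sum_{k}\|\ket{\Psi(k)}\|)^{2}\le d$ and the final inequality $d\cdot\min_{k}\|\ket{\Psi(k)}\|^{2}\le 1$ require, so your argument is self-contained at that point without needing to pass through the off-diagonal sum. Otherwise the two proofs coincide.
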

\begin{proof}
As $F_{\varepsilon}$ is an increasing function, Eq.~(\ref{1eq662}) follows directly from Eq.~(\ref{lup2}) and Theorem~\ref{1theo3}. Eq.~(\ref{up1}) is also derived from Eq.~(\ref{lup2}) by observing that 
\begin{equation}
\label{1theo51}
\left|\left\langle \Psi\left(k\right)\right|P_{z}\left|\Psi\left(h\right)\right\rangle \right|\leq\left\Vert \left|\Psi\left(k\right)\right\rangle \right\Vert \left\Vert \left|\Psi\left(h\right)\right\rangle \right\Vert,
\end{equation} which gives 
\begin{equation}
\label{1theo51b}
L\left(\pi_{1}\right)\leq\left(\underset{k}{\sum}\left\Vert \left|\Psi\left(k\right)\right\rangle \right\Vert \right)^{2}.
\end{equation}
Applying Lemma~\ref{desc1}, we obtain 
\begin{equation}
\label{1theo52}
\left\langle \Psi\right|\left.\Psi\right\rangle  =\underset{k,h}{\sum}\left\langle \Psi\left(k\right)\right|\left.\Psi\left(h\right)\right\rangle=1.
\end{equation} 
Then, using it with $\underset{k}{\sum}\left\Vert \left|\Psi\left(k\right)\right\rangle \right\Vert \leq\sqrt{d}\underset{k}{\sum}\left\langle \Psi\left(k\right)\right|\left.\Psi\left(k\right)\right\rangle $ and Eq.~(\ref{1theo51b}), we have
\begin{equation}
L\left(\pi_{1}\right)\leq d.
\end{equation} 
Finally, Eq.~\eqref{up3} follows from
$d\left(\underset{k}{\min}\,\left\langle \Psi\left(k\right)\right|\left.\Psi\left(k\right)\right\rangle \right) \leq 1.$

\end{proof}

\section{Alternative applications}\label{S5}

Our results from Sec.~\ref{uper} have a main theoretical motivation, which is showing the relation between quantum speed-up and quantum parallelism. Furthermore, the theorems are interesting for related subjects that we discuss below.

\subsection{Upper bounds for randomized complexity} \label{example}

Theorems~\ref{1theo3} and \ref{pol3} may be applied for finding upper bounds on $R_{\varepsilon}$. For example, consider Deutsch-Jozsa
algorithm, thereby we have the output probability
$$\pi_{1}\left(x\right)=\frac{1}{n^{2}}\left(n-2\left|x\right|\right)^{2}$$ 
for inputs of size~$n$.
We obtain the terms $\left\{ \alpha_{b}\right\}$ by applying 
the pairwise orthogonality between functions $\chi_{b}$.
The algorithm works by applying just one query. Thus, 
from the fact that $\deg\left ( \pi_{1}\left ( x \right )  \right ) \leq 2$~\cite{beals2001quantum}, we have that if $|b|>2 $ then $\alpha_{b}=0$. This leaves us with three cases to analyze. First, if $|b|=0$, then $\alpha_{b}=\frac{1}{n}$, notice that there is just one index $b$ satisfying $|b|=0$. Second, if $|b|=1$, then $\alpha_{b}=0$.  
Third, there are $\frac{n\left(n-1\right)}{2}$ indices $b$ such that $|b|=2$, 
in this case $\alpha_{b}=\frac{2}{n^{2}}$.
Therefore, we have that $\sum\left|\alpha_{b}\right|=1$, which implies 
$$R_{\varepsilon}\leq\left\lceil \frac{-16\ln\left(\varepsilon\right)\left(2-\varepsilon\right)}{\left(1-2\varepsilon\right)^{2}}\right\rceil$$ by Eq.~\eqref{1eq66}. This is not quite tight numerically because a classical decision tree applies $2$ queries in order to solve Deutsch-Jozsa problem within error $\frac{1}{3}$. However, this is asymptotically tight and proves
that Deutsch-Jozsa algorithm can be simulated classically using a constant
number of queries and fixed error.

\subsection{Lower bounds for exact quantum complexity} 
\label{example} 

Corollary \ref{cor} can be applied for finding lower bounds on $Q_{E}$. For example, consider the total function $AND_n:\big\{0,1\big\}^{n}\rightarrow \{0,1\big\}$ where $AND_n\left(x\right)=1$ if and only if $x_{i}=1$ for all $i$. We denote \textit{weight} of input $x$ as the number of ones in $x$. A randomized decision tree computing $AND_n$ must discriminate the input with weight $n$ from the set of inputs with weight $n-1$. Suppose that some randomized decision tree computes $AND_n$ with less than $\frac{n}{3}$ queries, then such randomized tree is a probabilistic distribution over a set of deterministic decision trees querying less than $\frac{n}{3}$ values in $x$. Then, in order to discriminate an input of weight $n$ from the set of inputs with weight $n-1$, the randomized tree will find $0$ for some $x_i$ with expectation less than $\frac{1}{3}$. In this sense, $R_{\frac{1}{3}}\left(AND_n\right) \geq \frac{n}{3}-1$.

Considering that $L\left(AND_n\right)=1$, we have $Q_{E}\left(AND_n\right)\in \Omega \left ( n \right )$ by Eq.~\eqref{1eq7}, which it is asymptotically tight \cite{ambainis2015exact}.

\subsection{Polynomial approximation by quantum algorithms}

There is an equivalence between 1-query algorithms and degree-2 polynomials. That is, a partial boolean function $f$ can be approximated by a polynomial for some error bounded by $\varepsilon > \frac{1}{2}$ if and only if $f$ can be computed by a quantum algorithm with error bounded by $\varepsilon' > \frac{1}{2}$ and a single query. However, the problem of transforming polynomials of higher degree to quantum algorithms still needs more results \cite{aaronson2015polynomials}. Theorem \ref{pol1} implies that $t$-query algorithms compute any function approximated by degree-$t$ polynomials with Fourier $1$-norm bounded by a constant. Then, the high degree problem is reduced to finding algorithms for polynomials with a high Fourier $1$-norm.

\section{Conclusion}
\label{S6}
In the present work we identified a necessary property for a hard classical simulation of quantum query algorithms, namely a high Fourier $1$-norm defined over the output probability. A remarkable feature about Fourier $1$-norm is that it depends on both evolution and measurement steps. Properties like quantum entanglement are defined just on the quantum states, which implies that a poor measurement step can cancel advantages obtained in the evolution stage, where we assume that such evolution stage was hard to simulate. Nevertheless, the accuracy of Fourier $1$-norm for approximating quantum gain depends on a simulation, whose relation with the most efficient classical 
simulation is unknown.

We also formalized the advantage given by quantum algorithms, as the quotient between the classical and quantum complexities for a given task. We have that such quotient is upper-bounded by an expression which depends quadratically on the Fourier $1$-norm. Thus, a large factor produced between quantum and classical algorithms implies a 
large Fourier $1$-norm. 
Our result suggests the following intuitions: 
\begin{enumerate}
\item Output probabilities with 
large Fourier $1$-norms 
imply that such output probability can be represented by a function whose shape is much different from any function in the Fourier basis---functions that can be efficiently simulated by classical means. 
\item Output probabilities with high  Fourier $1$-norms imply that many functions from Fourier basis 
are acting simultaneously. That strongly suggests quantum parallelism.
\end{enumerate}

We  can also link Fourier $1$-norm to quantum parallelism 
as follows. A quantum query algorithm can be viewed as a state decomposition by Lemma~\ref{desc1}, which is denoted as a set of vectors associated to the algorithm. This formulation emphasizes the presence of quantum parallelism,  because each combination of vectors in the decomposition represents a function in the Fourier basis, where such functions are added producing an output probability function. The Fourier $1$-norm is related to this decomposition. Since a high Fourier $1$-norm implies: (a) a big number of non-zero vectors in such decomposition, i.e., high values for $\#\left\{k:\left|\Psi\left(k\right)\right\rangle \neq0\right\}$; and, (b) minimum product values that are not too big for such vectors, i.e., low values for $\left(\underset{k}{\min}\,\left\langle \Psi\left(k\right)\right|\left.\Psi\left(k\right)\right\rangle \right);$ then (a) and (b) are also necessary conditions for a hard classical simulation. Both measures can be linked to quantum parallelism by the following intuition. If $\#\left\{ k:\left|\Psi\left(k\right)\right\rangle \neq0\right\}$ is low, then there are less combinations of vectors adding functions on the output probability function. Larger values for $\underset{k}{\min}\,\left\langle \Psi\left(k\right)\right|\left.\Psi\left(k\right)\right\rangle$ implies lower values for $\#\left\{ k:\left|\Psi\left(k\right)\right\rangle \neq0\right\}$. However, it also implies that the output probability function has a shape closer to functions in the Fourier basis, hence such output probability has a cheap classical simulation.

Finally, the present work leaves the following open problems:

\begin{itemize}
\item Finding degree-2 polynomials is an alternative strategy for obtaining 1-query quantum algorithms \cite{aaronson2015polynomials}. Thus, developing a method for obtaining high $1$-norm polynomials of degree 2 and bounded in $\left\{0,1\right\}^{n}$ would help to find algorithms offering a potential advantage over classical algorithms. 
\item A high Fourier $1$-norm implies a necessary condition for quantum query speed-up. Can we obtain a necessary and sufficient condition by adding another property?
\end{itemize}

\section*{Acknowledgements}

This work received financial support from CAPES, FAPERJ, and CNPq. The authors thank R. Portugal, S. Collier, J. Szwarcfiter, and E. Galv\~{a}o for useful discussions and suggestions. This work was initiated while SAG was at the Federal University of Rio de Janeiro, Brazil.

\bibliographystyle{plain}
\bibliography{sample-bibliography}

\begin{thebibliography}{10}

\bibitem{Aaronson2}
Scott Aaronson.
\newblock Limitations of quantum advice and one-way communication.
\newblock In {\em Computational Complexity, 2004. Proceedings. 19th IEEE Annual
  Conference on}, pages 320--332. IEEE, 2004.

\bibitem{aaronson2010bqp}
Scott Aaronson.
\newblock {BQP} and the polynomial hierarchy.
\newblock In {\em Proceedings of the forty-second ACM symposium on Theory of
  computing}, pages 141--150. ACM, 2010.

\bibitem{AARONAMB}
Scott Aaronson and Andris Ambainis.
\newblock The need for structure in quantum speedups.
\newblock {\em Theory of Computing}, 10(6):133--166, 2014.

\bibitem{aaronson2015polynomials}
Scott Aaronson, Andris Ambainis, J{\=a}nis Iraids, Martins Kokainis, and Juris
  Smotrovs.
\newblock Polynomials, quantum query complexity, and grothendieck's inequality.
\newblock {\em arXiv preprint arXiv:1511.08682}, 2015.

\bibitem{abbott2010understanding}
Alastair~A Abbott and Cristian~S Calude.
\newblock Understanding the quantum computational speed-up via de-quantisation.
\newblock arXiv:1006.1419, 2010.

\bibitem{AMBXXX}
Andris Ambainis.
\newblock Quantum walk algorithm for element distinctness.
\newblock {\em SIAM Journal on Computing}, 37(1):210--239, 2007.

\bibitem{ambainis2015exact}
Andris Ambainis, Jozef Gruska, and Shenggen Zheng.
\newblock Exact quantum algorithms have advantage for almost all boolean
  functions.
\newblock {\em Quantum Information \& Computation}, 15(5-6):435--452, 2015.

\bibitem{angluin1979fast}
Dana Angluin and Leslie~G Valiant.
\newblock Fast probabilistic algorithms for hamiltonian circuits and matchings.
\newblock {\em Journal of Computer and System Sciences}, 18(2):155--193, 1979.

\bibitem{BARNUM}
Howard Barnum, Michael Saks, and Mario Szegedy.
\newblock Quantum decision trees and semidefinite.
\newblock Technical Report LA-UR-01-6417, Los Alamos National Laboratory, May
  2002.
\newblock Available at http://lib-www.lanl.gov/cgi-bin/getfile?00818934.pdf.

\bibitem{beals2001quantum}
Robert Beals, Harry Buhrman, Richard Cleve, Michele Mosca, and Ronald De~Wolf.
\newblock Quantum lower bounds by polynomials.
\newblock {\em Journal of the ACM (JACM)}, 48(4):778--797, 2001.

\bibitem{BELL}
John~S Bell.
\newblock On the problem of hidden variables in quantum mechanics.
\newblock {\em Reviews of Modern Physics}, 38(3):447, 1966.

\bibitem{buhrman2002complexity}
Harry Buhrman and Ronald De~Wolf.
\newblock Complexity measures and decision tree complexity: a survey.
\newblock {\em Theoretical Computer Science}, 288(1):21--43, 2002.

\bibitem{DATTA}
Animesh Datta, Anil Shaji, and Carlton~M Caves.
\newblock Quantum discord and the power of one qubit.
\newblock {\em Physical Review Letters}, 100(5):050502, 2008.

\bibitem{DATTA2}
Animesh Datta and Guifre Vidal.
\newblock Role of entanglement and correlations in mixed-state quantum
  computation.
\newblock {\em Physical Review A}, 75(4):042310, 2007.

\bibitem{DEWOLF1}
Ronald De~Wolf.
\newblock A brief introduction to fourier analysis on the boolean cube.
\newblock {\em Theory of Computing, Graduate Surveys}, 1:1--20, 2008.

\bibitem{Deutsch2}
David Deutsch.
\newblock Quantum theory, the church-turing principle and the universal quantum
  computer.
\newblock In {\em Proceedings of the Royal Society of London A: Mathematical,
  Physical and Engineering Sciences}, volume 400, pages 97--117. The Royal
  Society, 1985.

\bibitem{Deutsch}
David Deutsch and Richard Jozsa.
\newblock Rapid solution of problems by quantum computation.
\newblock In {\em Proceedings of the Royal Society of London A: Mathematical,
  Physical and Engineering Sciences}, volume 439, pages 553--558. The Royal
  Society, 1992.

\bibitem{GRILLO}
S.~A. Grillo and F.~L. Marquezino.
\newblock Quantum query as a state decomposition.
\newblock arXiv:1602.07716, 2016.
\newblock Submitted to Theoretical Computer Science.

\bibitem{HOWARD}
Mark Howard, Joel Wallman, Victor Veitch, and Joseph Emerson.
\newblock Contextuality supplies the 'magic' for quantum computation.
\newblock {\em Nature}, 510(7505):351--355, 2014.

\bibitem{Jordan}
Stephen Jordan.
\newblock Quantum algorithm zoo.
\newblock http://math.nist.gov/quantum/zoo/, 2015.

\bibitem{JOZSA2}
Richard Jozsa.
\newblock Entanglement and quantum computation.
\newblock arXiv:quant-ph/9707034, 1997.

\bibitem{JOZSA}
Richard Jozsa and Noah Linden.
\newblock On the role of entanglement in quantum-computational speed-up.
\newblock In {\em Proceedings of the Royal Society of London A: Mathematical,
  Physical and Engineering Sciences}, volume 459, pages 2011--2032. The Royal
  Society, 2003.

\bibitem{KNILL}
Emanuel Knill and Raymond Laflamme.
\newblock Power of one bit of quantum information.
\newblock {\em Physical Review Letters}, 81(25):5672, 1998.

\bibitem{KOCHEN}
Simon Kochen and Ernst~P Specker.
\newblock The problem of hidden variables in quantum mechanics.
\newblock In {\em The Logico-Algebraic Approach to Quantum Mechanics}, pages
  293--328. Springer, 1975.

\bibitem{montanaro2012some}
Ashley Montanaro.
\newblock Some applications of hypercontractive inequalities in quantum
  information theory.
\newblock {\em Journal of Mathematical Physics}, 53(12):122206, 2012.

\bibitem{o2014analysis}
Ryan O'Donnell.
\newblock {\em Analysis of boolean functions}.
\newblock Cambridge University Press, 2014.

\bibitem{ozols2013quantum}
Maris Ozols, Martin Roetteler, and J{\'e}r{\'e}mie Roland.
\newblock Quantum rejection sampling.
\newblock {\em ACM Transactions on Computation Theory (TOCT)}, 5(3):11, 2013.

\bibitem{Reich}
Ben~W Reichardt and Robert Spalek.
\newblock Span-program-based quantum algorithm for evaluating formulas.
\newblock In {\em Proceedings of the fortieth annual ACM symposium on Theory of
  computing}, pages 103--112. ACM, 2008.

\bibitem{rotteler2010quantum}
Martin R{\"o}tteler.
\newblock Quantum algorithms for highly non-linear boolean functions.
\newblock In {\em Proceedings of the twenty-first annual ACM-SIAM symposium on
  Discrete Algorithms}, pages 448--457. SIAM, 2010.

\bibitem{VIDAL}
Guifr{\'e} Vidal.
\newblock Efficient classical simulation of slightly entangled quantum
  computations.
\newblock {\em Physical Review Letters}, 91(14):147902, 2003.

\end{thebibliography}

\end{document}